\def\R{\mathbb{R}}
\def\half{\frac{1}{2}}
\def\F{\mathcal F}
\def\A{\mathcal A}
\def\L{\mathcal L}
\newcommand{\eqdist}{\,{\buildrel d \over =}\,}
\newtheorem{theorem}{Theorem}
\newtheorem{proposition}[theorem]{Proposition}
\newenvironment{proof}
{\begin{trivlist}\item[\, 
{\bf Proof.}]}{{\hfill $\square$}\end{trivlist}}
\title{Risk aggregation and stochastic claims reserving\\ in disability insurance}
\author{
Boualem Djehiche\footnote{Department of Mathematics, KTH Royal Institute of Technology, Sweden, {boualem@kth.se}. }
\and Björn Löfdahl \footnote{Department of Mathematics, KTH Royal Institute of Technology, Sweden, {bjornlg@kth.se}. }
}
\date{\today}
\begin{document}

\maketitle

\begin{abstract}
We consider a large, homogeneous portfolio of life or disability annuity policies. The policies are assumed to be independent conditional on an external stochastic process representing the economic-demographic environment. Using a conditional law of large numbers, we establish the connection between claims reserving and risk aggregation for large portfolios. Further, we derive a partial differential equation for moments of present values. Moreover, we show how statistical multi-factor intensity models can be approximated by one-factor models, which allows for solving the PDEs very efficiently. Finally, we give a numerical example where moments of present values of disability annuities are computed using finite-difference methods and Monte Carlo simulations.

\end{abstract}

\textbf{Keywords:} Disability insurance, stochastic intensities, conditional independence, risk aggregation, stochastic claims reserving, mimicking.\vspace{4mm}

\section{Introduction}
The upcoming Solvency II regulatory framework brings many new challenges to the insurance industry. In particular, the new regulations suggest a new mindset regarding the valuation and risk management of insurance products. Historically, premiums and reserves are calculated under the assumption that the underlying transition intensities of death, disability onset, recovery and so on are deterministic. While the estimations should be prudent, this still implies that the systematic risk, i.e. the risk arising from uncertainty of the future development of the hazard rates, is not taken into account. This may have an impact on pricing as well as capital charges. In the Solvency II standard model, capital charges are computed using a scenario based approach, and the capital charge is given as the difference between the present value under best estimate assumptions, and the present value in a certain shock scenario. As an alternative, insurers may adopt an internal model, which should be based on a Value-at-Risk approach. 

Facing these challenges, a plethora of stochastic intensity models have appeared, in particular for modelling mortality. However, these works have largely focused on either calibration or on pricing a single policy under a suitable market implied measure. The risk management aspect has been left largely untouched, although there are notable exceptions. Dahl \cite{Dahl04} derives a pricing PDE for a wide class of life derivatives under a one-factor stochastic intensity model. Dahl points out that shocks from a one-factor model affect all cohorts equally, and that a multi-factor model across cohorts might be more realistic, although it would not offer any further insights. Dahl and M\o ller \cite{DahlMoller} consider pricing and hedging of life insurance liabilities with systematic mortality risk. Biffis \cite{Biffis05} considers annuities pricing under affine mortality models. Ludkovski and Young \cite{LudkovskiYoung} consider indifference pricing under stochastic hazard. Norberg \cite{Norberg95} derives an ODE for moments of present values assuming deterministic hazard rates. 

While stochastic mortality models have been thoroughly studied in the literature, stochastic disability models have not received the same attention. Levantesi and Menzietti \cite{LevantesiMenzietti} consider stochastic disability and mortality in the Solvency II context. The approach covers both systematic and idiosyncratic risk, and is suitable for small portfolios. Christiansen {\it et al}. \cite{Christiansen} suggest an internal model for Solvency II based on the forecasting technique of Hyndman and Ullah \cite{HyndmanUllah}. The approach includes fitting an intensity model over a range of time periods, and fitting a time series model to the time series of parameter estimates. The future development of the intensities is obtained by forecasting or simulation of the time series model. 

In this paper, we consider a large, homogeneous portfolio of life or disability annuity policies. The policies are assumed to be independent conditional on an external stochastic process representing the economic-demographic environment. Using a conditional law of large numbers, we show that the aggregated annuity cash flows can be approximated by its conditional expectation, an expression much akin to the actuarial reserve formula. This result highlights the connection between risk aggregation and claims reserving for large portfolios. Further, we derive a partial differential equation for moments of these present values. Moreover, we consider statistical multi-factor intensity models, and suggest methods for reducing their dimensionality. Using the so-called mimicking technique introduced by Krylov \cite{Krylov}, we suggest approximating multi-factor models by one-factor models, which allows for solving the PDEs very efficiently. Finally, we give a numerical example where moments of present values of disability annuities are computed using finite-difference methods and Monte Carlo simulations.

The paper is organized as follows. In Section \ref{sec:reserving}, consider an annuity policy under a simple stochastic intensity model. We derive a PDE for computing moments of the random present value of such policies. In Section \ref{sec:risk_agg}, we examine the aggregated cash flows from a large, homogeneous portfolio of insurance policies, and highlight the connection between risk aggregation and claims reserving. In Section \ref{sec:example}, we consider the specific application of disability insurance, and show how a class of statistical models can be incorporated into the pricing PDEs. In Section \ref{sec:numerics}, we present numerical results based on disability data from the Swedish insurance company Folksam.
\section{Stochastic claims reserving}\label{sec:reserving}
Let $\tau^1, \tau^2,\ldots$ be random event times (e.g. times of death or recovery from disability), and let
\begin{equation}
N^k_t = I\{\tau^k \leq t\},\hspace{10pt} k\geq 1.
\end{equation}
\noindent
Further, define the processes 
\begin{equation}
N^k = (N^k_t)_{t\geq 0},\hspace{10pt} k\geq 1,
\end{equation}
\noindent
and let
\begin{equation}
\F^N=(\F_t^N)_{t\geq0}=(\F_t^{N^1}\vee\F_t^{N^2}\vee\ldots)_{t\geq0}
\end{equation}
\noindent
denote the filtration generated by $N^1, N^2, \ldots$. Now, let $Z$ be a stochastic process with natural filtration $\F^Z=(\F_t^Z)_{t\geq0}$. Here, $N^k_t$ denotes the state of an insured individual at time $t$, $\tau^k$ represents the corresponding death or recovery time, and $Z_t$ represents the state of the economic-demographic environment. We assume that $N^1,N^2,\ldots$ are independent conditional on $\F_\infty^Z$, and that the $\F^Z\vee\F^N$-intensity of $N^k$ is the process $\lambda^k$ of the form
\begin{equation}
\lambda^k_t = q(t,Z_t)(1-N^k_{t}), \hspace{10pt} t\geq0.
\end{equation}
\noindent
Consider an annuity policy paying $g(t,Z_t)$ continuously as long as $N^k_t=0$, until a fixed future time $T$. This type of annuity allows for payments from the contract to depend on time as well as the state of the economic-demographic environment. For example, the contract could be inflation-linked and contain a deferred period. The random present value $L^k_t$ of this policy can be written as
\begin{equation}
L^k_t = \int_t^{T}g(s,Z_s)(1-N^k_s)e^{-\int_t^sr(u)du}ds,
\end{equation}
\noindent
where the short rate $r$ is assumed to be adapted to $\F^Z$. Further, the time $t$ reserve for this contract is given by $E[L^k_t|\F^Z_t\vee\F^N_t]$, that is, the expected value given the history of the policy and of the environment. Our goal is to find an efficient way to compute this reserve. First, we need the following result, which is given in a slightly different form in Norberg's concise introduction to stochastic intensity models \cite{Norberg}.
\begin{proposition}\label{prop:conditional_X}
Assume that $E[|\lambda^k_t|] < \infty$  for each $k$, $t\geq0$. Then, for $s\geq t$,
\begin{equation}\label{eq:conditional_X}
E[1-N^k_s|\F^Z_s\vee\F^N_t] = P(\tau^k>s|\F^Z_s\vee\F^N_t) = (1-N^k_t)e^{-\int_t^sq(u,Z_u)du}.
\end{equation}
\noindent

\end{proposition}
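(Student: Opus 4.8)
The plan is to reduce the claim to the conditional independence structure, then compute the survival probability directly using the fact that, conditional on $\F_\infty^Z$, the counting process $N^k$ is an inhomogeneous Poisson-type process with deterministic (given $Z$) intensity $q(u,Z_u)$. First I would note that the two equalities are of different character: the leftmost, $E[1-N^k_s|\F^Z_s\vee\F^N_t]=P(\tau^k>s|\F^Z_s\vee\F^N_t)$, is immediate since $1-N^k_s=I\{\tau^k>s\}$ is an indicator, so its conditional expectation is the corresponding conditional probability. The content is entirely in the second equality.

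For that, the key step is to condition on the richer $\sigma$-algebra $\F_\infty^Z\vee\F_t^N$ and use the tower property: writing $P(\tau^k>s|\F^Z_s\vee\F^N_t)=E\bigl[P(\tau^k>s|\F_\infty^Z\vee\F_t^N)\,\big|\,\F^Z_s\vee\F^N_t\bigr]$, it suffices to show the inner probability already equals $(1-N^k_t)e^{-\int_t^s q(u,Z_u)du}$, which is $\F^Z_s\vee\F^N_t$-measurable, so the outer conditional expectation leaves it unchanged. To evaluate the inner probability, I would split on the event $\{N^k_t=1\}$ (where $\tau^k\le t\le s$, so both sides are $0$) and the event $\{N^k_t=0\}$. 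On $\{N^k_t=0\}$ we have $\tau^k>t$, and conditional on $\F_\infty^Z$ the process $N^k$ has the deterministic intensity $q(u,Z_u)$; hence the conditional hazard of $\tau^k$ on $(t,s]$ is $\int_t^s q(u,Z_u)du$, giving $P(\tau^k>s|\F_\infty^Z\vee\F_t^N)=e^{-\int_t^s q(u,Z_u)du}$ on that event. Combining the two cases yields the factor $(1-N^k_t)$.

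The main obstacle is making the phrase "conditional on $\F_\infty^Z$, $N^k$ has deterministic intensity $q(u,Z_u)$, hence survival probability $e^{-\int q}$" fully rigorous, i.e. passing from the $\F^Z\vee\F^N$-intensity specification $\lambda^k_t=q(t,Z_t)(1-N^k_t)$ to the conditional-on-$\F_\infty^Z$ law of $\tau^k$. This is where the integrability hypothesis $E[|\lambda^k_t|]<\infty$ and the conditional independence of the $N^k$ enter; the cleanest route is to invoke the standard characterization (as in Norberg \cite{Norberg}) that an $\F^Z\vee\F^N$-intensity of the stated product form means precisely that, given $\F_\infty^Z$, $\tau^k$ is the first jump of a Poisson process with intensity measure $q(u,Z_u)du$, so that $P(\tau^k>s\mid\tau^k>t,\F_\infty^Z)=\exp(-\int_t^s q(u,Z_u)du)$. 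Once that identification is granted, everything else is the elementary case-split and tower-property bookkeeping described above.
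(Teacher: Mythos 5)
Your proof is correct in spirit and correctly identifies the crux of the matter, but it takes a genuinely different route from the paper. The paper's proof is constructive: it introduces the compensated process $M^k_s = N^k_s - \int_0^s\lambda^k_u\,du$, uses its martingale property (crucially, conditioning on the enlarged $\sigma$-algebra $\F^Z_s\vee\F^N_t$ rather than $\F^Z_t\vee\F^N_t$) to derive a closed integral equation for $Y^k_s = E[1-N^k_s\mid\F^Z_s\vee\F^N_t]$, then solves the resulting linear ODE to produce the exponential. Your proof instead conditions further, on $\F_\infty^Z\vee\F^N_t$, and invokes the doubly-stochastic (Cox) characterization directly: given the full path of $Z$, $\tau^k$ behaves like the first jump of an inhomogeneous Poisson process, so the survival probability is an exponential of the cumulative hazard.

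Both approaches lean on the same underlying structure at the same place: the paper's step $E[M^k_s\mid\F^Z_s\vee\F^N_t] = M^k_t$ is not the ordinary martingale property (which would condition on $\F^Z_t\vee\F^N_t$) and secretly uses the conditional-independence / immersion structure of the model, exactly the ingredient you need to pass from the intensity specification to the Cox-process law. So neither proof is "free"; they just package the nontrivial input differently. What the paper's route buys is self-containment — it derives the exponential form from an ODE rather than quoting it — at the cost of a somewhat opaque conditioning step. What your route buys is conceptual clarity — the exponential appears immediately from the Cox picture, and the $(1-N^k_t)$ factor drops out cleanly from the case-split — at the cost of explicitly offloading the hard part to the "standard characterization." One small point worth making explicit in your write-up: the measurability of $(1-N^k_t)e^{-\int_t^s q(u,Z_u)\,du}$ with respect to $\F^Z_s\vee\F^N_t$ is what lets the outer tower-property step collapse; you state this but it is the linchpin and deserves emphasis.
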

\begin{proof}
First, note that the process $(M^k_s)_{s\geq 0}$ defined by
\begin{equation}\label{eq:N_mg}
M^k_s = N^k_s-\int_0^s\lambda^k_udu
\end{equation}
\noindent
is a $\F^Z\vee\F^N$-martingale \cite[p. 106]{Norberg}. For $s\geq t$, let $Y^k_s = P(\tau^k>s|\F^Z_s\vee\F^N_t) = E[1-N^k_s|\F^Z_s\vee\F^N_t]$. Using \eqref{eq:N_mg}, we have
\begin{align}
Y^k_s &= E[1-N^k_s+\int_0^s\lambda^k_udu-\int_0^s\lambda^k_udu|\F^Z_s\vee\F^N_t]\nonumber\\
&= 1- N^k_t + \int_0^t\lambda^k_udu -E[\int_0^s\lambda^k_udu|\F^Z_s\vee\F^N_t]\nonumber\\
&= 1- N^k_t + \int_0^t\lambda^k_udu -\int_0^t\lambda^k_udu -E[\int_t^s\lambda^k_udu|\F^Z_s\vee\F^N_t]\nonumber\\
&= 1- N^k_t - \int_t^sq(u,Z_u)E[1-N^k_u|\F^Z_s\vee\F^N_t]du\nonumber\\
&= 1- N^k_t - \int_t^sq(u,Z_u)E[1-N^k_u|\F^Z_u\vee\F^N_t]du\nonumber\\
&= 1- N^k_t - \int_t^sq(u,Z_u)Y^k_udu.
\end{align}
\noindent
Differentiating the above expression, we obtain
\begin{align}
\begin{cases}
dY^k_s = -q(s,Z_s)Y^k_sds,\hspace{10pt} s>t,\\
Y^k_t = 1-N^k_t,
\end{cases}
\end{align}
\noindent
with solution $Y^k_s = (1-N^k_t)e^{-\int_t^sq(u,Z_u)du}$.
\end{proof}
\noindent
Using Proposition \ref{prop:conditional_X}, we immediately obtain
\begin{align}
E[L^k_t|\F^Z_t\vee \F^N_t]&= E[E[L^k_t|\F^Z_T\vee \F^N_t]|\F^Z_t\vee \F^N_t]\nonumber\\
&= E[E[\int_t^{T}g(s,Z_s)(1-N^k_s)e^{-\int_t^sr(u)du}ds|\F^Z_T\vee \F^N_t]|\F^Z_t\vee \F^N_t]\nonumber\\
&= (1-N^k_t)E[\int_t^{T}g(s,Z_s)e^{-\int_t^sq(u,Z_u)du}e^{-\int_t^sr(u)du}ds|\F^Z_t\vee \F^N_t].
\end{align}
\noindent
Note that if the environment process $Z$ is replaced by a deterministic function, the functional $V_t$ defined by 
\begin{equation}\label{eq:lln_cashflows}
V_t = \int_t^{T}g(s,Z_s)e^{-\int_t^sq(u,Z_u)du}e^{-\int_t^sr(u)du}ds
\end{equation}
\noindent
corresponds to the time $t$-reserve of a policy paying $g$ monetary units continuously. Now, since $q$ and $g$ are functions of the stochastic process $Z$, $V_t$ is a random variable, and the reserve depends on the distribution of $V_t$. In the case where $Z$ is a Markov process, a natural candidate for the time $t$ reserve of an active contract is the function $v(t,z)$ given by
\begin{eqnarray}\label{eq:reserve}
v(t,z) = E[V_t|Z_t = z] = E^{t,z}\Big[\int_t^T g(s,Z_s)e^{-\int_t^sq(u,Z_u)du}e^{-\int_t^sr(u)du}ds\Big].
\end{eqnarray}
\noindent
Let $\bar q(t,z) = q(t,z)+r(t)$, and assume that $\bar q$ is lower bounded, $g$ is continuous and bounded, and that $Z$ is a Markov process with infinitesimal generator $\A$. Then, $v(t,z)$ given by \eqref{eq:reserve} is a Feynman-Kac functional, satisfying the backward PDE
\begin{align}\label{eq:reserve_pde}
\begin{cases}
-\frac{\partial v}{\partial s} +\bar q(s,z)v = \A v + g(s,z), \hspace{10pt} t\leq s<T, \\
v(T,z) = 0.
\end{cases}
\end{align}
\noindent
For risk management purposes, it is not enough to be able to compute expected values. Often, it is necessary to estimate moments or quantiles. Moments of $V_t$ can be found using the following result.

\begin{proposition}\label{prop:moments_pde}
Let $\bar q = q+r$, and assume that $\bar q$ is lower bounded, $g$ is continuous and bounded, and that $Z$ is a Markov process with generator $\A$. Then, for $n\geq 1$, $v_n(t,z) = E^{t,z}[V_t^n]$ satisfies the backward PDE
\begin{align}\label{eq:moments_pde}
\begin{cases}
-\frac{\partial v_n}{\partial s} +n\bar q(s,z)v_n = \A v_n + ng(s,z)v_{n-1}, \hspace{10pt} t\leq s<T, \\
v_n(T,z) = 0,
\end{cases}
\end{align}
\noindent
where, naturally, $v_0(t,z) = E^{t,z}[V_t^0] = 1.$
\end{proposition}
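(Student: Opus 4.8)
The plan is to reduce the statement to the Feynman--Kac result already used for \eqref{eq:reserve}, by first establishing a \emph{pathwise} linear ODE for $V_t$ along a frozen trajectory of $Z$, then raising it to the $n$-th power, and finally averaging. Write $\bar q = q+r$, so that the two exponentials in \eqref{eq:lln_cashflows} combine and
$V_t = \int_t^T g(s,Z_s)e^{-\int_t^s\bar q(u,Z_u)du}\,ds$. Differentiating in the lower limit $t$ (Leibniz' rule, with the boundary term $-g(t,Z_t)$ coming from the limit and $\bar q(t,Z_t)V_t$ from the exponent) gives, for almost every path of $Z$,
\[
\frac{d}{dt}V_t = \bar q(t,Z_t)\,V_t - g(t,Z_t),\qquad V_T = 0.
\]

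Consequently $\tfrac{d}{dt}V_t^n = nV_t^{n-1}\tfrac{d}{dt}V_t = n\bar q(t,Z_t)V_t^n - ng(t,Z_t)V_t^{n-1}$, and solving this linear ODE with the integrating factor $e^{-n\int_0^t\bar q(u,Z_u)du}$ and terminal condition $V_T^n=0$ yields the pathwise identity
\[
V_t^n = \int_t^T n\,g(s,Z_s)\,V_s^{n-1}\,e^{-n\int_t^s\bar q(u,Z_u)du}\,ds .
\]
Taking $E^{t,z}[\,\cdot\,]$ of both sides, the boundedness of $g$ together with the lower bound on $\bar q$ makes $V_s$ uniformly bounded on $[0,T]$, so Fubini lets me move the expectation inside the $ds$-integral. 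For fixed $s\in[t,T]$ the factor $g(s,Z_s)e^{-n\int_t^s\bar q(u,Z_u)du}$ is $\F^Z_s$-measurable, whereas $V_s^{n-1}$ depends on $Z$ only through its restriction to $[s,T]$; conditioning on $\F^Z_s$ and invoking the Markov property of $Z$ gives $E[V_s^{n-1}\mid\F^Z_s]=v_{n-1}(s,Z_s)$. Hence
\[
v_n(t,z) = E^{t,z}\Big[\int_t^T n\,g(s,Z_s)\,v_{n-1}(s,Z_s)\,e^{-n\int_t^s\bar q(u,Z_u)du}\,ds\Big],
\]
which is a Feynman--Kac functional of exactly the form \eqref{eq:reserve} with potential $n\bar q$ in place of $\bar q$ and source term $n\,g\,v_{n-1}$ in place of $g$. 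Applying the Feynman--Kac theorem as for \eqref{eq:reserve_pde} then produces precisely \eqref{eq:moments_pde}.

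The argument is finished by induction on $n$. The base case is $v_0\equiv 1$, which is continuous and bounded. For the inductive step one needs $v_{n-1}$ to be continuous and bounded so that $n\,g\,v_{n-1}$ is an admissible source: boundedness is immediate from the uniform bound on $V_s$, and continuity follows from the induction hypothesis that $v_{n-1}$ solves the PDE. I expect the main technical point to be exactly this verification that the Feynman--Kac representation is a classical (rather than merely viscosity) solution at each step --- this is where the regularity hypotheses on $\bar q$, $g$ and the generator $\A$ are genuinely needed --- whereas the pathwise ODE manipulation and the Fubini/Markov steps are routine once $V$ is known to be bounded.
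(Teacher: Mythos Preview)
Your argument is correct and follows essentially the same route as the paper: differentiate $V_t$ pathwise, raise to the $n$-th power, solve the resulting linear ODE with an integrating factor to obtain $V_t^n=\int_t^T n g(s,Z_s)V_s^{n-1}e^{-n\int_t^s\bar q\,du}ds$, condition on $\F^Z_s$ and use the Markov property to replace $V_s^{n-1}$ by $v_{n-1}(s,Z_s)$, then invoke Feynman--Kac. Your version is in fact slightly more careful than the paper's, since you make explicit the Fubini justification via the uniform bound on $V$ and the inductive regularity check on $v_{n-1}$.
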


\begin{proof}
Differentiating $V_t$, we obtain
\begin{align}\label{eq:v_ito}
& dV_t = (\bar q_tV_t-g_t)dt.
\end{align}
\noindent
Therefore,
\begin{align}\label{eq:moments_ito2}
& d(V^n_t) = nV_t^{n-1}(\bar q_tV_t-g_t)dt = (n\bar q_tV_t^n - ng_tV_t^{n-1})dt.
\end{align}
\noindent
Multiplying with the integrating factor $e^{-\int_0^tn\bar q_udu}$, integrating and using $V^n_T$ = 0, we have
\begin{align}\label{eq:moments_ito3}
V^n_t =  \int_t^Tng(s,Z_s)V_s^{n-1}e^{-\int_t^sn\bar q_udu}ds.
\end{align}
\noindent
Taking conditional expectations and using the Markov property of $Z$,
\begin{align}\label{eq:moments_ito4}
E^{t,z}[V^n_t] &=  E^{t,z}[\int_t^Tng(s,Z_s)V_s^{n-1}e^{-\int_t^sn\bar q_udu}ds]\nonumber\\
&= E^{t,z}[\int_t^TE[ng(s,Z_s)V_s^{n-1}e^{-\int_t^sn\bar q_udu}|\F^Z_s]ds]\nonumber\\
&=E^{t,z}[\int_t^Tng(s,Z_s)E[V_s^{n-1}|Z_s]e^{-\int_t^sn\bar q_udu}ds]\nonumber\\
&=E^{t,z}[\int_t^Tng(s,Z_s)v_{n-1}(s,Z_s)e^{-\int_t^sn\bar q_udu}ds].
\end{align}
\noindent
From the Feynman-Kac formula, it follows immediately that $v_n(t,z)$ satisfies the PDE \eqref{eq:moments_pde}, see e.g. Friedman \cite[Theorem 5.3]{friedman2012stochastic} for details.
\end{proof}
\noindent
Proposition \ref{prop:moments_pde} can be used to find the $k$'th moment of $V_t$ by solving the PDE \eqref{eq:moments_pde} for $n=1,\ldots,k$ iteratively. This is useful since it is often faster to numerically solve a PDE than to perform a Monte Carlo simulation, especially for this type of path-dependent problem.

\section{Risk aggregation}\label{sec:risk_agg}
We now consider the risk aggregation problem. For a portfolio consisting of annuity policies for the population $N^1,N^2,\ldots,N^n$, the random present value $L^{(n)}_t$ becomes
\begin{equation}\label{eq:L_t}
L^{(n)}_t=\sum_{k=1}^nL_t^k = \sum_{k=1}^n\int_t^{T}g(s,Z_s)(1-N_s^k)e^{-\int_t^sr(u)du}ds.
\end{equation}
\noindent
We will now investigate the properties of $L^{(n)}$ as the number of policies grows large.
\begin{proposition}\label{prop:L_t convergence}
Conditional on $\F^Z_T\vee\F^N_t$,
\begin{equation}
\lim_{n\rightarrow \infty}\frac{1}{n}L^{(n)}_t - \frac{1}{n}\sum_{k=1}^n(1-N^k_t)V_t = 0 \hspace{10pt} a.s.,
\end{equation}
\noindent
where $V_t$ is given by \eqref{eq:lln_cashflows}.
\end{proposition}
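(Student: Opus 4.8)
The plan is to read the statement as a conditional strong law of large numbers. Put $\mathcal G=\F^Z_T\vee\F^N_t$; since $\frac1n L^{(n)}_t=\frac1n\sum_{k=1}^nL^k_t$, the assertion is that $\frac1n\sum_{k=1}^n\big(L^k_t-(1-N^k_t)V_t\big)\to0$ $P$-a.s. The first step is to identify $(1-N^k_t)V_t$ as the conditional mean $E[L^k_t\mid\mathcal G]$. This is essentially the computation already performed in the derivation of \eqref{eq:reserve}: conditionally on $\F^Z_\infty$ each $N^k$ is, by the conditional-independence hypothesis and the form $\lambda^k_t=q(t,Z_t)(1-N^k_{t})$, a Cox-type process absorbed after its single jump, so $P(\tau^k>s\mid\F^Z_\infty\vee\F^N_t)=(1-N^k_t)e^{-\int_t^sq(u,Z_u)\,du}$ (equivalently, repeat the argument of Proposition \ref{prop:conditional_X} with $T$ in place of $s$). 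For $s\le T$ the right-hand side is already $\mathcal G$-measurable, so conditioning down gives $E[1-N^k_s\mid\mathcal G]=(1-N^k_t)e^{-\int_t^sq(u,Z_u)\,du}$; multiplying by $g(s,Z_s)e^{-\int_t^sr(u)\,du}$ and integrating over $[t,T]$ (Fubini, with $g$ and $r$ being $\F^Z_T$-measurable) yields $E[L^k_t\mid\mathcal G]=(1-N^k_t)V_t$ with $V_t$ as in \eqref{eq:lln_cashflows}.

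The second step is to verify that, conditionally on $\mathcal G$, the variables $L^1_t,L^2_t,\dots$ are independent and uniformly bounded. Independence is inherited from the conditional independence of $N^1,N^2,\dots$ given $\F^Z_\infty$: each $L^k_t$ is a measurable functional of the path $(N^k_s)_{t\le s\le T}$ and of $Z$, and enlarging the conditioning $\sigma$-field from $\F^Z_\infty$ to $\F^Z_T\vee\F^N_t$ --- the extra data $\F^N_t=\bigvee_j\F^{N^j}_t$ splitting into one ``own-past'' piece per policy --- preserves conditional independence across $k$. Boundedness follows once the $Z$-path is frozen: since $0\le1-N^k_s\le1$,
\[
|L^k_t|\le\Lambda:=\int_t^T|g(s,Z_s)|\,e^{-\int_t^sr(u)\,du}\,ds ,
\]
and $\Lambda$ is a finite, $\F^Z_T$-measurable random variable (finiteness uses boundedness of $g$ and path-wise integrability of $r$ on $[t,T]$), whence $E[(L^k_t)^2\mid\mathcal G]\le\Lambda^2<\infty$ a.s.

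With these two facts the conclusion follows from Kolmogorov's strong law. Working under a regular conditional probability given $\mathcal G$, the sequence $(L^k_t)_{k\ge1}$ is independent with $\sum_{k\ge1}k^{-2}\mathrm{Var}(L^k_t\mid\mathcal G)\le\Lambda^2\sum_{k\ge1}k^{-2}<\infty$ a.s., so Kolmogorov's criterion gives $\frac1n\sum_{k=1}^n\big(L^k_t-E[L^k_t\mid\mathcal G]\big)\to0$ a.s., which is the claim after substituting $E[L^k_t\mid\mathcal G]=(1-N^k_t)V_t$. Alternatively --- avoiding the conditional form of the strong law --- one notes that, conditionally on $\F^Z_T$, the active policies (those with $N^k_t=0$) have i.i.d.\ present values while the inactive ones satisfy $L^k_t=0=(1-N^k_t)V_t$; splitting the sum at $A_n=\{k\le n:N^k_t=0\}$ and applying the ordinary i.i.d.\ strong law to $\tfrac1{|A_n|}\sum_{k\in A_n}L^k_t\to V_t$, together with $|A_n|/n\le1$, gives the result.

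I expect the only genuine subtlety to lie in the bookkeeping of the conditioning $\sigma$-fields in the second step: one must check that conditioning on $\F^Z_T\vee\F^N_t$ --- which records the past of every policy, not just of policy $k$ --- still leaves the residual lifetimes, and hence the $L^k_t$, conditionally independent, and that the conditional survival formula is stable under the passage from $\F^Z_s\vee\F^N_t$ (Proposition \ref{prop:conditional_X}) to $\F^Z_T\vee\F^N_t$. The remaining ingredients --- the Fubini evaluation of the conditional mean, the boundedness estimate, and the appeal to the (conditional) Kolmogorov strong law --- are routine.
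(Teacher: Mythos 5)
Your proof is correct, but it takes a genuinely different --- and in fact cleaner --- route than the paper's. The paper fixes $s\in[t,T]$, applies Prakasa Rao's conditional strong law to the pointwise values $N^1_s,N^2_s,\dots$ (conditionally on $\F^Z_s\vee\F^N_t$, with conditional variances bounded by $1$), obtaining $\tfrac1n\sum_k\bigl(N^k_s-E[N^k_s\mid\F^Z_s\vee\F^N_t]\bigr)\to0$ a.s.\ for each fixed $s$, and then passes the limit through the time integral by appealing to a "conditional dominated convergence theorem." It separately computes $E[\tfrac1n L^{(n)}_t\mid\F^Z_T\vee\F^N_t]=\tfrac1n\sum_k(1-N^k_t)V_t$ via Proposition~\ref{prop:conditional_X}. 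You instead apply the conditional SLLN once and for all to the present values $L^1_t,L^2_t,\dots$ themselves, using the uniform bound $|L^k_t|\le\Lambda$ with $\Lambda$ an $\F^Z_T$-measurable finite random variable; your first step $E[L^k_t\mid\F^Z_T\vee\F^N_t]=(1-N^k_t)V_t$ is exactly the same computation the paper performs. What your route buys is that it avoids the delicate interchange of an $s$-by-$s$ a.s.\ limit with the integral over $[t,T]$: in the paper's argument the exceptional null set a priori depends on $s$, and justifying the passage to the integral requires a Fubini-type argument that the paper elides. Your direct application to $L^k_t$ sidesteps this entirely. Both proofs, however, share the same implicit assumption you flag yourself: the hypothesis is conditional independence of the $N^k$ given $\F^Z_\infty$, yet the SLLN is invoked under the conditioning $\sigma$-field $\F^Z_T\vee\F^N_t$ (or $\F^Z_s\vee\F^N_t$ in the paper). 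This transfer --- passing from $\F^Z_\infty$ to the smaller $\F^Z_T$ while augmenting with per-policy past information $\F^N_t$ --- deserves a word of justification in either treatment, but since the paper asserts it without comment, it is not a gap specific to your proposal. Your alternative "active/inactive split" argument also works, provided one notes that if $|A_n|$ stays bounded the claim is trivial by the $1/n$ normalization.
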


\begin{proof}
Since $N^1_s,N^2_s,\ldots$ are independent conditional on $\F^Z_s\vee\F^N_t$ with
\begin{align}
\sum_{k=1}^\infty\frac{E[(N^k_s-E[N^k_s|\F^Z_s\vee\F^N_t])^2|\F^Z_s\vee\F^N_t]}{k^2} \leq \sum_{k=1}^\infty\frac{1}{k^2} < \infty,
\end{align}
\noindent
it follows from the conditional Law of Large Numbers (see Prakasa Rao \cite[Theorem 6]{PrakasaRao}) that, conditional on $\F^Z_s\vee\F^N_t$,
\begin{equation}\label{eq:sum_N_limit}
\lim_{n\rightarrow \infty}\frac{1}{n}\sum_{k=1}^nN^k_s - E[\frac{1}{n}\sum_{k=1}^nN^k_s|\F^Z_s\vee\F^N_t] = 0 \hspace{10pt} a.s.
\end{equation}
\noindent
This implies that, conditional on $\F^Z_T\vee\F^N_t$,
\begin{align}\label{eq:L_t convergence}
& \frac{1}{n}L^{(n)}_t - E[\frac{1}{n}L^{(n)}_t|\F^Z_T\vee\F^N_t] \nonumber\\
=& \frac{1}{n}\sum_{k=1}^n\int_t^{T}g(s,Z_s)(1-N_s^k)e^{-\int_t^sr(u)du}ds\nonumber\\
-&  E[\frac{1}{n}\sum_{k=1}^n\int_t^{T}g(s,Z_s)(1-N_s^k)e^{-\int_t^sr(u)du}ds|\F^Z_T\vee\F^N_t]\nonumber\\
=& \frac{1}{n}\sum_{k=1}^n\int_t^{T}g(s,Z_s)(E[N_s^k|\F^Z_s\vee\F^N_t]-N_s^k)e^{-\int_t^sr(u)du}ds \rightarrow 0 \hspace{10pt} a.s.,
\end{align}
\noindent
by \eqref{eq:sum_N_limit} and the conditional dominated convergence theorem. Now, using Proposition \ref{prop:conditional_X}, we have
\begin{align}\label{eq:conditional_L}
E[\frac{1}{n}L^{(n)}_t|\F^Z_T\vee \F^N_t] &= \int_t^{T}\frac{1}{n}\sum_{k=1}^nE[g(s,Z_s)(1-N^k_s)|\F^Z_s\vee \F^N_t]e^{-\int_t^sr(u)du}ds\nonumber\\
&=\int_t^{T}\frac{1}{n}\sum_{k=1}^n(1-N^k_t)g(s,Z_s)e^{-\int_t^sq(u,Z_u)du}e^{-\int_t^sr(u)du}ds\nonumber\\
&=\frac{1}{n}\sum_{k=1}^n(1-N^k_t)\int_t^{T}g(s,Z_s)e^{-\int_t^sq(u,Z_u)du}e^{-\int_t^sr(u)du}ds\nonumber\\
&=\frac{1}{n}\sum_{k=1}^n(1-N^k_t)V_t.
\end{align}
\noindent
The claim follows from \eqref{eq:L_t convergence} and \eqref{eq:conditional_L}.
\end{proof}
\noindent
When the portfolio is large enough, Proposition \ref{prop:L_t convergence} motivates the approximation
\begin{equation}
L^{(n)}_t \approx \sum_{k=1}^n(1-N^k_t)V_t.
\end{equation}
\noindent
Hence, in order to determine the distribution of the present value of the portfolio given the history of the environment and the policies, it suffices to consider the random variable $V_t$.
Indeed, all the individual risks are diversified away, and only the systematic risk, that is, the risk that the economic-demographic environment changes, remains. This is formalized through the random variable $V_t$. In particular, an approximate $p$-quantile of the random present value of the portfolio is given by the relation
\begin{equation}\label{eq:quantile}
F^{-1}_{L^{(n)}_t}(p) \approx F^{-1}_{\sum_{k=1}^n(1-N^k_t)V_t}(p) = \sum_{k=1}^n(1-N^k_t)F^{-1}_{V_t}(p),
\end{equation}
\noindent
where the equality follows from the positive homogeneity of the quantile function. This result is analogous to the loan portfolio risk result of Vasicek \cite{VasicekRisk}, which is the foundation of the Basel regulatory credit risk framework. In the Basel framework, the homogeneity requirement of the portfolio is relaxed to allow for efficient approximation of portfolio Value-at-Risk and capital allocation, which possibly suggests that it can also be considered in this application.

Properties of $V_t$ can be investigated using simulation or PDE techniques. Further, the time $t$ reserve for the entire portfolio is given by
\begin{equation}
E[L^{(n)}_t|\F^Z_t\vee \F^N_t]= E[E[L^{(n)}_t|\F^Z_T\vee \F^N_t]|\F^Z_t\vee \F^N_t] = \sum_{k=1}^n(1-N^k_t)E[V_t|\F^Z_t],
\end{equation}
\noindent
and the amount of money allocated to each active policy at time $t$ is simply $E[V_t|\F^Z_t]$. Based on these considerations, the problem of risk aggregation is closely connected to the problem of claims reserving. 

We conclude this section with some comments regarding the Solvency II framework. In the Solvency II standard model, capital charges are computed using a scenario based approach, and the capital charge is given as the difference between the present value under best estimate assumptions, and the present value in a certain shock scenario. As an alternative, insurers may adopt an internal model, which should be based on a Value-at-Risk approach over a one-year time horizon. For instance, the capital charge may be taken to be the Economic Capital, i.e. the difference between the time $t$ value and the $p$-quantile of the value at time $t+1$. We stress the fact that the approximate portfolio quantile given by \eqref{eq:quantile} represents the risk over the entire policy period, i.e. it can be used to compute Value-at-Risk over $T-t$ years. Thus, a topic for future research would be to find an extension of the above result, compatible with the Solvency II framework.

\section{Application to disability insurance}\label{sec:example}
In this section, we consider an example from disability insurance. We seek to compute moments of $V_t$ for which the process $Z$, representing the economic-demographic environment, is constructed from a generalized linear model for disability recovery probabilities. For simplicity, we will assume that the short rate is deterministic. As we will see below, $Z$ is typically non-Markov, and we cannot directly use the Feynman-Kac formula to compute moments of $V_t$. We will consider two possible solutions to this problem. First, we construct a multivariate Markov process with $Z$ as one of its component. This turns out to work well in some special cases. Second, we will rely on the so-called mimicking technique to obtain a reliable approximation of $V_t$.

\subsection{A stochastic termination model}\label{sec:stochastic_termination}
Following Aro, Djehiche and L\"ofdahl \cite{ADL13}, the probability $p_{\nu_t}(x,d)$ that the disability of an individual with disability inception age $x$ and disability duration $d$ is terminated within $[d,d+\Delta d)$ is given by
\begin{equation}\label{eq:p_nu}
p_{\nu_t}(x,d) = \frac{\exp\big\{\sum_{i=1}^n \phi^i(x)\sum_{j=1}^m\psi^j(d)\nu^{i,j}_t\big\}}{1+\exp\big\{\sum_{i=1}^n \phi^i(x)\sum_{j=1}^m\psi^j(d)\nu^{i,j}_t\big\}},
\end{equation}
\noindent
where $\phi$ and $\psi$ are basis functions in $x$ and $d$, respectively, and $\nu$ is an $n\times m$-dimensional stochastic process. For simplicity, the termination intensity $q(d,\nu_t)$ is approximated to be piecewise constant over a small time period $\Delta d$, i.e. it is given by the relation
\begin{equation}\label{eq:probability_intensity}
p_{\nu_t}(x,d)=1-\exp\big\{-q(d,\nu_t)\Delta d\big\}.
\end{equation}
\noindent
In the present context, the duration $d$ is simply assumed to be 0 at time $t=0$. Using this, together with \eqref{eq:p_nu}-\eqref{eq:probability_intensity}, we obtain, for a fixed $x$ and $\Delta d$, the following approximation for the intensity $q$:
\begin{equation}
q(t,\nu_t) = \frac{1}{\Delta d}\log\Big(1+\exp\big\{\sum_{i=1}^n \phi^i(x)\sum_{j=1}^m\psi^j(t)\nu^{i,j}_t\big\}\Big).
\end{equation}
\noindent
Given a suitable stochastic process form for $\nu$, we may solve the PDE \eqref{eq:moments_pde} with $nm$ space dimensions. However, this is not very efficient when $nm$ is large. To obtain a more tractable model, we will try to reduce the number of dimensions. 

\subsection{Reducing the dimensionality}
Define the process $Z = \{Z_t\}_{t\geq0}$ by
\begin{equation}
Z_t = \sum_{i=1}^n \phi^i(x)\sum_{j=1}^m\psi^j(t)\nu^{i,j}_t,
\end{equation}
\noindent
and define the function $f$ by
\begin{equation}\label{eq:f}
f(\cdot) = \frac{1}{\Delta d}\log(1+\exp(\cdot)),
\end{equation}
\noindent
so that we have
\begin{equation}\label{eq:f_and_q}
q(t,\nu_t) = f(Z_t),\hspace{10pt} t\geq0.
\end{equation}
It is easily seen that we can rewrite $Z_t$ on vector form as
\begin{equation}\label{eq:Z}
Z_t = a(t)^T\nu_t,
\end{equation}
\noindent
with 
\begin{align}
a(t)^T &= (\phi^1(x)\psi^1(t),\ldots,\phi^n(x)\psi^m(t)),\\
\nu_t& = (\nu_t^{1,1},\ldots,\nu_t^{n,m}).
\end{align}
\noindent
From now on, we restrict our attention to the case where $\nu$ can be written as
\begin{equation}\label{eq:nu}
\nu_t = \nu_0 + \mu t + AW_t,
\end{equation}
\noindent
where $W$ is an $nm$-dimensional standard Brownian motion with independent components, $\mu\in\R^{nm}$ and $A\in\R^{nm\times nm}$ is the Cholesky factorization of the covariance matrix $\Sigma$ of $\nu$. In principle, any dynamic for $\nu_t$ is possible. The random walk is a natural choice, since it is easy to fit and simulate, and has been the model of choice in e.g. Christiansen {\it et al.} \cite{Christiansen}. If $a$ is locally bounded , this modelling choice guarantees that the assumption in Proposition \ref{prop:conditional_X} is satisfied, since, in view of \eqref{eq:f}-\eqref{eq:f_and_q}, we have
\begin{equation}
E[f(Z_t)] = E[\frac{1}{\Delta d}\log(1+e^{Z_t})] \leq \frac{\log{2}}{\Delta d} + \frac{1}{\Delta d}E[|Z_t|] < \infty.
\end{equation}
\noindent
Next, consider the dynamics of $Z$. The It\^o formula yields, using \eqref{eq:nu} and \eqref{eq:Z},
\begin{equation}
dZ_t = (\dot a^T\nu_t + a^T\mu)dt + a^TAdW_t,
\end{equation}
\noindent
provided that $\dot a$ exists. This expression cannot directly be written on the form
\begin{equation}\label{Z_dynamics}
dZ_t = \alpha(t,Z_t)dt + \gamma(t)dW_t,
\end{equation}
\noindent
and therefore it is not a 1-dimensional It\^o diffusion. In general, it is not even a Markov process. This is due to the time dependence of $a$, a property which originates from the fact that the termination intensity depends on the duration of the illness. This property cannot easily be relaxed.

To remedy this, it may be possible to construct a process $\widehat Z$ of the form \eqref{Z_dynamics}, identical to $Z$ in law. This would imply 
\begin{align}
V_t &= \int_t^{T}g(s,\nu_s)e^{-\int_t^sq(u,\nu_u)du}e^{-\int_t^sr(u)du}ds \\
&\eqdist \int_t^{T}g(s,\widehat Z_s)e^{-\int_t^sf(\widehat Z_u)du}e^{-\int_t^sr(u)du}ds =: \widehat V_t,
\end{align}
\noindent
and, more importantly, that $\L(V) = \L(\widehat V)$, i.e. that the processes $V$ and $\widehat V$ are identical in law. According to \O ksendal \cite[Theorem 8.4.3]{oksendal2003stochastic}, $\L(\widehat Z) =\L(Z)$ if and only if
\begin{align}\label{eq:oksendal_cond}
\alpha(t,Z_t) &= E[\dot a^T\nu_t + a^T\mu| \F^Z_t]\\
\gamma^2(t) &= a^TAA^Ta.
\end{align}
\noindent
Unfortunately, the conditional expectation \eqref{eq:oksendal_cond} is in general not easy to compute. We now turn our attention to a special case where it is possible to construct a multivariate Markov process that contains $Z$.

\subsubsection{Construction of a multivariate Markov process}\label{sec:multivariate_markov}
We now consider the case where each component of $a$ is either constant or linear in $t$. As an example, we take the model from Section \ref{sec:stochastic_termination} with basis functions 
\begin{align*}
 &\phi^1(x) =  \frac{64-x}{39} \quad  \text{, } \quad \phi^2(x) = \frac{x-25}{39}\\
 &\psi^1(t) = 1 \quad  \text{, } \quad \psi^2(t) = t.
\end{align*}
\noindent
Aro, Djehiche and L\"ofdahl \cite{ADL13} fit this model to data from a Swedish insurance company and suggest that it can be seen as a middle ground model when considering goodness of fit versus tractability. Here, it proves to be an interesting special case which allows us to construct a multivariate Markov process from a non-Markovian one. 

Consider the vector valued Markov process $Z=(Z^1,Z^2)$ defined by
\begin{align}\label{eq:Z_system}
\begin{cases}
Z^1_t &= a^T\nu_t,\\
Z^2_t &= \dot a^T\nu_t.
\end{cases}
\end{align}
\noindent
The process $Z$ satisfies the system of stochastic differential equations
\begin{align}\label{eq:dZ_system}
\begin{cases}
dZ^1_t &= (Z^2_t + a^T\mu)dt + a^TAdW_t,\\
dZ^2_t &= \dot a^T\mu dt + \dot a^TAdW_t.
\end{cases}
\end{align}
\noindent
By \cite[Theorem 8.4.3]{oksendal2003stochastic}, $Z$ is identical in law to the process $\widehat Z=(\widehat Z^1,\widehat Z^2)$ defined by
\begin{align}\label{eq:dZhat}
d\widehat Z_t = \alpha(t,\widehat Z_t)dt + \gamma(t)d\widehat W_t,
\end{align}
\noindent
where 
\begin{align}\label{eq:oksendal_cond2dim}
\alpha(t,Z_t) &= E[\left(
\begin{array}{c}
Z^2_t + a^T\mu\\
\dot a^T\mu
\end{array} \right)
| \F^Z_t] = \left(
\begin{array}{c}
Z^2_t + a^T\mu\\
\dot a^T\mu
\end{array} \right),\\
\gamma(t)\gamma(t)^T &= \left(
\begin{array}{c}
a^TA\\
\dot a^TA
\end{array} \right)
\left(
\begin{array}{c}
a^TA\\
\dot a^TA
\end{array} \right)^T = \left(
\begin{array}{cc}
a^T\Sigma a & a^T\Sigma \dot a \\
a^T\Sigma \dot a & \dot a^T\Sigma \dot a
\end{array} \right),
\end{align}
\noindent
and $\widehat W$ is a two-dimensional standard Wiener process. Thus, we have effectively reduced the process $\nu$ to the two-dimensional process $\widehat Z$, and we may compute moments of present values by solving the PDEs \eqref{eq:moments_pde} with the generator $\widehat \A$ of $\widehat Z$ and termination intensity $q(t,\widehat Z_t) = f(\widehat Z^1_t)$. 

This recipe can easily be extended to the case where $a^{(k)}$, the $k$'th derivative of $a$ w.r.t. time, is constant. Then, the system \eqref{eq:dZ_system} becomes a system of $k+1$ SDEs, and the process defined by \eqref{eq:dZhat} will have $k+1$ driving Wiener processes. Thus, if $k+1<nm$, that is, if the number of driving Wiener processes is smaller than the number of parameters in the statistical model, the dimensionality of the problem can be reduced, while still preserving all probabilistic properties of the system.

\subsubsection{Mimicking the killed environment process}
It is not always possible to construct a multivariate Markov process containing $Z$ as above, and even if it is possible, it is not certain that the number of dimensions will be reduced. For example consider the model from Section \ref{sec:stochastic_termination} with basis functions 
\begin{align*}
 &\phi^1(x) =  \frac{64-x}{39} \quad  \text{, } \quad \phi^2(x) = \frac{x-25}{39}\\
 &\psi^1(t) = 1 \quad  \text{, } \quad \psi^2(t) = e^{-t} \quad  \text{, } \quad \psi^3(t) = e^{-2t}.
\end{align*}
\noindent
It is immediate that we cannot apply the recipe of Section \ref{sec:multivariate_markov}. As an alternative, we will rely on an idea suggested by Krylov \cite{Krylov} to construct a Markov process $\widehat Z$ that mimics certain features of the behavior of the process $Z$ such as
\begin{equation}
\widehat Z_t \eqdist Z_t, \hspace{5pt} t\geq0.
\end{equation} 
\noindent
Proposition \ref{Krylov} below displays a general result about existence of the Markov process $\widehat Z$ when $Z$ is a non-Markov diffusion. This result appeared first in Krylov \cite{Krylov} and extended in Gy\"ongy \cite{Gyongy} and Borkar \cite{Borkar89} and generalized in various ways to L{\'e}vy processes and semimartingales in Bhatt and Borkar \cite{BhattBorkar}, Kurtz and Stockbrigde \cite{KurtzStockbridge98,KurtzStockbridge01}, Bentata and Cont \cite{BentataCont}, and Bouhadou and Ouknine \cite{BouhadouOuknine}. The process $\widehat Z$ is often called {\it Markovian projection} or {\it mimicking process} of $Z$.

\begin{proposition} (Kurtz and Stockbrigde \cite{KurtzStockbridge98}, Corollary 4.3) \label{Krylov}\\
\noindent
When $Z$ satisfies
\begin{equation}\label{sde}
Z_t=Z_0+\int_0^ t\beta(s)ds+\int_0^ t\delta(s)dW_s,
\end{equation}
where, W is an $\mathbb{R}^d$-valued ${\cal F}_t$-Brownian motion; $\delta$ and $\beta$ are measurable, ${\cal F}_t$-adapted processes taking values in the set of $d\times d$ matrices $\mathbb{M}^{d\times d}$ and $\mathbb{R}^d$, respectively;
and $Z_0$ is $\mathbb{R}^d$-valued and ${\cal F}_0$-measurable. Then there exist measurable functions $\sigma: [0,\infty)\times\mathbb{R}^d \mapsto \mathbb{M}^{d\times d}$ and $b:\mathbb{R}^d\mapsto \mathbb{R}^d$, an
$\mathbb{R}^d$-valued Brownian motion $\widehat W$, and a process $\widehat Z$ satisfying
\begin{equation}\label{sde-m-1}
\widehat Z_t=\widehat Z_0+\int_0^ tb(s,\widehat Z_s)ds+\int_0^ t\sigma(s,\widehat Z_s)d\widehat W_s,
\end{equation}
such that  for each $t\ge 0$,
\begin{equation}\label{mimicking-2}
(Z_t,E[\beta(t)|Z_t],E[\delta(t)\delta^T(t)|Z_t])\stackrel{d}=(\widehat Z_t, b(t,\widehat Z_t), \sigma(t,\widehat Z_t)\sigma^T(t,\widehat Z_t)).
\end{equation} 
\end{proposition}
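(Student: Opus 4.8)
The strategy is to read off candidate coefficients from regular conditional expectations, to check that the one-dimensional marginals of $Z$ satisfy the Fokker--Planck equation attached to those coefficients, and then to invoke an existence theorem for the associated martingale problem in order to produce $\widehat Z$.

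First I would fix, for each $t\ge0$, a jointly measurable version of the regular conditional expectations and set
\begin{equation}
b(t,z)=E[\beta(t)\mid Z_t=z],\qquad a(t,z)=E[\delta(t)\delta^T(t)\mid Z_t=z],
\end{equation}
taking $\sigma(t,z)$ to be the measurable nonnegative-definite symmetric square root of $a(t,z)$ given by the spectral decomposition. Applying It\^o's formula to $\varphi(Z_t)$ for $\varphi\in C_c^\infty(\mathbb{R}^d)$, using \eqref{sde}, taking expectations, and invoking the tower property together with the definitions of $b$ and $a$, I would obtain, writing $\mu_t$ for the law of $Z_t$,
\begin{equation}
\int\varphi\,d\mu_t=\int\varphi\,d\mu_0+\int_0^t\!\!\int\Big(b(s,z)\!\cdot\!\nabla\varphi(z)+\tfrac12\,\mathrm{tr}\big(a(s,z)D^2\varphi(z)\big)\Big)\mu_s(dz)\,ds,
\end{equation}
so that $(\mu_t)_{t\ge0}$ is a measure-valued solution, started from $\mu_0$, of the forward equation for the time-inhomogeneous operator $\mathcal{L}_t\varphi=b(t,\cdot)\!\cdot\!\nabla\varphi+\tfrac12\mathrm{tr}(a(t,\cdot)D^2\varphi)$.

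The decisive step is to realize this flow by a process. Invoking the existence theory for the time-inhomogeneous martingale problem associated with $\mathcal{L}_t$ --- this is the Kurtz--Stockbridge input, phrased through their linear-programming / constrained-martingale-problem formulation, and in the same spirit as superposition principles for Fokker--Planck equations --- one obtains a filtered probability space carrying a process $\widehat Z$ with $\widehat Z_0\sim\mu_0$ such that $\varphi(\widehat Z_t)-\int_0^t\mathcal{L}_s\varphi(\widehat Z_s)\,ds$ is a martingale for all $\varphi\in C_c^\infty$ and whose one-dimensional marginals are exactly $(\mu_t)_{t\ge0}$. From the martingale problem one reads off the decomposition $\widehat Z_t=\widehat Z_0+\int_0^t b(s,\widehat Z_s)\,ds+M_t$ with $\langle M\rangle_t=\int_0^t a(s,\widehat Z_s)\,ds$, and a martingale representation argument --- enlarging the probability space when $\sigma$ is degenerate, as in Ikeda--Watanabe --- produces a Brownian motion $\widehat W$ with $M_t=\int_0^t\sigma(s,\widehat Z_s)\,d\widehat W_s$, which is \eqref{sde-m-1}.

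It remains to verify \eqref{mimicking-2}, which is a pushforward statement: for each fixed $t$ both triples in \eqref{mimicking-2} are the image of $\mu_t$ under one and the same measurable map $z\mapsto(z,b(t,z),a(t,z))$, and $Z_t$ and $\widehat Z_t$ have the common law $\mu_t$ by construction, so the two triples coincide in distribution. I expect the main obstacle to be the third step: passing from ``the marginals solve the forward equation for $(b,a)$'' to ``there is a genuine process with generator $\mathcal{L}_t$ realizing those marginals'' is delicate precisely because $b$ and $a$ are merely measurable and the forward equation need not be uniquely solvable; this is the substance imported from Kurtz and Stockbridge, and once it is granted the remaining steps are essentially bookkeeping.
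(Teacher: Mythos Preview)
The paper does not prove this proposition at all: it is quoted verbatim as Corollary~4.3 of Kurtz and Stockbridge and is used as a black box. There is therefore no ``paper's own proof'' to compare your attempt against.

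That said, your sketch is a faithful outline of how the result is actually established in the Krylov/Gy\"ongy/Kurtz--Stockbridge tradition: define $b$ and $a=\sigma\sigma^T$ as regular conditional expectations, show via It\^o and the tower property that the one-dimensional marginals $(\mu_t)$ solve the forward equation for the time-inhomogeneous operator $\mathcal{L}_t$, then invoke an existence result for the associated martingale problem to realize those marginals by a Markov process, and finally upgrade the martingale-problem solution to a weak SDE solution by a square-root plus Ikeda--Watanabe representation argument. Your identification of the third step as the only nontrivial one is correct; everything else is indeed bookkeeping. The one place where your write-up is slightly loose is the claim that the martingale-problem solution can be taken with one-dimensional marginals \emph{exactly} $(\mu_t)$: the existence theorems you cite typically produce \emph{some} solution, and one must argue (as Kurtz--Stockbridge do through their constrained/stationary formulation, or as in the superposition principle of Figalli/Trevisan) that a solution matching the prescribed marginal flow can be selected. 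You acknowledge this as the ``substance imported from Kurtz and Stockbridge,'' which is fair, but be aware that this is precisely the content of the cited corollary rather than something you have supplied.
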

\noindent
For the sequel, we set
\begin{equation}\label{eq:t-reserve}
v^Z(t,z)=v(t,z),
\end{equation}
whenever $Z_t=z$. When the intensity $q$ is a {\it constant}, it is immediate that 
\begin{equation}\label{mimicking-1}
v^{\widehat Z}(t,\widehat Z_t)\eqdist v^Z(t,Z_t)
\end{equation}
\noindent
holds whenever the property (\ref{mimicking-2}) remains true. A counter-example constructed by Borkar \cite{Borkar89} suggests that it is not always possible to obtain a Markov process $\widehat Z$ whose finite dimensional distributions agree with those of the process $Z$. Therefore, (\ref{mimicking-1}) may not hold when the discount factor $q$ depends of $Z$. Kurtz and Stockbrigde \cite[Theorem 5.1]{KurtzStockbridge98} do construct a Markov process $\widehat Z$ for which (\ref{mimicking-1}) holds, even when the discount factor $q$ depends on $Z$, but the $t$-marginal distributions of $\widehat Z$ and $Z$ may not be identical i.e. $\widehat Z$ does not mimic $Z$.

A closer look at  the $t$-reserve $v^Z(t,Z_t)$ suggests that we should mimic the process $\bar Z$ obtained by 'killing' $Z$ at rate $q$ in the sense described e.g. in Rogers and Williams \cite[Section III.18]{RogersWilliams}. The intuitive idea behind killing is that $\bar Z$ agrees with $Z$ up to time $\bar \tau$ and $\bar Z_t=\partial, \ t\geq \bar\tau$, where $\partial$ is some absorbing state, and
\begin{equation}\label{killed-0}
\bar P(\bar\tau>t|\F^Z_t) = e^{-\int_0^t q(s,Z_s)ds}.
\end{equation}
\noindent
Given a process $Z$ on $(\Omega, {\cal F}, {\cal F}_t, P)$, the process $\bar Z$ obtained by 'killing' $Z$ at rate $q$ is defined on a probability space $(\bar\Omega, \bar{\cal F}, \bar{\cal F}_t, \bar P)$ by 
\begin{equation}\label{killed-1}
\bar P(\bar Z_t\in A) := E[M_t I_{\{Z_t\in A\}}], 
\end{equation}
where $M_t:=e^{-\int_0^t q(s,Z_s)ds}$. Moreover, for any Borel measurable and bounded function $f$,
\begin{equation}\label{killed-2}
\begin{array}{lll}
\bar E[f(\bar Z_t)|\bar Z_s=z]&=E[f(Z_t)\frac{M_t}{M_s}| Z_s=z]\\ &=E[f(Z_t)e^{-\int_s^t q(u,Z_u) du}| Z_s=z].
\end{array}
\end{equation}

\noindent If $Z$ is given by (\ref{sde}), letting
$$
{\cal L}_tf(x):=\beta(t)\nabla f(x)+\frac{1}{2}tr\left(\delta(t)\delta^T(t) \nabla^2 f(x)\right)-q(t,x)f(x),
\quad f\in{\cal C}_0^{\infty}(\mathbb{R}^d),
$$
applying It\^o's formula to $M_tf(Z_t)$ and taking expectation, we get
$$
E[M_tf(Z_t)]=E[f(Z_0)]+\int_0^t E[M_s{\cal L}_sf(Z_s)]ds.
$$
Thus, in view of (\ref{killed-1}), we have
\begin{equation}\label{forward}
\begin{array}{lll}
\bar E[f(\bar Z_t)]&=\bar E[f(\bar Z_0)]+\int_0 \bar E[{\cal L}_sf(\bar Z_s)]ds\\
&=\bar E[f(\bar Z_0)]+\int_0 \bar E[\bar E[{\cal L}_sf(\bar Z_s)|\bar Z_s]]ds \\ & =
\bar E[f(\bar Z_0)]+\int_0 \bar E[\widehat {\cal A}_sf(\bar Z_s)]ds,
\end{array}
\end{equation}
where, 
\begin{equation}\label{MG}
\widehat {\cal A}_tf(x):={\cal A}_tf(x)-q(t,x)f(x)
\end{equation}
and 
\begin{equation}\label{G}
{\cal A}_tf(x)=:\bar b(t,x)\nabla f(x)+\frac{1}{2}tr\left(\bar\sigma\bar\sigma^T(t,x)\nabla^2 f(x)\right),
\end{equation}
with
\begin{equation}\label{mimicking-3}
\begin{array}{lll}
\bar b(t,x):=\bar E[\beta(t)|\bar Z_t=x]=E[M_t\beta(t)|Z_t=x],\\  \\
\bar \sigma\bar \sigma^T(t,x):=\bar E[\delta(t)\delta^T(t)|\bar Z_t=x]=E[M_t\delta(t)\delta^T(t)|Z_t=x].
\end{array}
\end{equation} 
In view of Proposition \ref{Krylov}, then there exist an $\mathbb{R}^d$-valued Brownian motion $B$, and a process $\widehat {\bar Z}$ satisfying
\begin{equation}\label{sde-m-2}
\widehat {\bar Z}_t=\widehat {\bar Z}_0+\int_0^ tb(s,\widehat{\bar Z}_s)ds+\int_0^ t\sigma(s,\widehat{\bar Z}_s)dB_s,
\end{equation}
whose infinitesimal generator is $\widehat {\cal A}$, such that,  for each $t\ge 0$,
\begin{equation}\label{mimicking-4}
(\bar Z_t,\bar E[\beta(t)|\bar Z_t],\bar E[\delta(t)\delta^T(t)|\bar Z_t])\stackrel{d}=(\widehat{\bar Z}_t, b(t,\widehat{\bar Z}_t), \sigma(t,\widehat{\bar Z}_t)\sigma^T(t,\widehat{\bar Z}_t)).
\end{equation} 
\noindent 
In terms of the mimicked killed Markov diffusion process $\widehat{\bar Z}$, using (\ref{killed-2}) and (\ref{mimicking-4}), we have the following property for the $t$-reserve:
\begin{equation}\label{eq:mimicking-5}
\begin{array}{lll}
E[v^Z(t,Z_t)] & =E[\int_t^Te^{-\int_t^sr(u)du} E[e^{-\int_t^s q(u,Z_u)du}g(s,Z_s)|Z_t]ds] \\ 
 &=\int_t^Te^{-\int_t^sr(u)du} \bar E[g(s,\bar Z_s)]ds \\
 &=\int_t^Te^{-\int_t^sr(u)du} \bar E[g(s,\widehat{\bar Z}_s)]ds\\ 
& =\int_t^Te^{-\int_t^sr(u)du} E[e^{-\int_t^s q(u,\widehat{ Z}_u)du}g(s,\widehat{ Z}_s)]ds \\
&=E[v^{\widehat Z}(t,\widehat{ Z}_t)].
\end{array}
\end{equation} 
Applying the Feynman-Kac formula, $v=v^{\widehat Z}$ satisfies the following PDE
\begin{equation}\left\{\begin{array}{lll}
\frac{\partial v}{\partial s}(s,x)+\widehat{\cal A}_s v(s,x)+g(s,x)=r(s)v(s,x),\,\, t\le s<T,\\ \\ v(T,x)=0.
\end{array}\right.
\end{equation}
\noindent
Hence, using (\ref{MG}), we get 
\begin{equation}\label{eq:PDE-m1}
\left\{\begin{array}{lll}
\frac{\partial v}{\partial s}(s,x)+{\cal A}_s v(s,x)+g(s,x)=(q(s,x)+r(s))v(s,x),\,\, t\le s<T,\\ \\ v(T,x)=0.
\end{array}
\right.
\end{equation}
\noindent
Note that \eqref{eq:mimicking-5} does not imply that $v^Z(t,x)=v^{\widehat Z}(t,x)$ for all $x$, only that, 'on average over all $x$' they will agree. A way to think of this is that if $v^{ Z}(t, Z_t)$ is an unbiased estimator of some parameter $\theta$, then $v^{\widehat Z}(t,\widehat{ Z}_t)$ is also an unbiased estimator of $\theta$. For all purposes, the PDE (\ref{eq:PDE-m1}) is only useful if we can explicitly compute the terms $\bar b$ and $\bar \sigma$ displayed in (\ref{mimicking-3}), which is in general out of reach even for the simplest Gaussian dynamics, due to presence of the path-dependent discounting factor $M$. This makes the idea of mimicking the killed process less attractive. We make one final attempt in constructing a mimicking process that preserves some properties of $V_t$.

\subsubsection{Mimicking the environment process}\label{sec:mimicking}
We suggest the following recipe for computing an approximate $t$-reserve. First, we determine the Markovian projection $\widehat Z$ of the underlying process $Z$. Then, we consider the moments of $\widehat V_t$ defined by 
\begin{equation}
v_n(t,z)=E^{t,z}[\widehat V^n_t],
\end{equation}
which satisfies \eqref{eq:moments_pde}, as an approximation of the true moments based on $Z$. Using Proposition \ref{Krylov}, letting
\begin{align}\label{eq:alpha_and_gamma}
\alpha(t,z) &:= E[\beta(t)|Z_t = z] = E[\dot a(t)^T\nu_t + a(t)^T\mu|Z_t = z]\nonumber\\
\gamma(t) &:=  \sqrt{E[\delta(t)\delta(t)^T| Z_t = z]} = \sqrt{\delta(t)\delta(t)^T} = \sqrt{a(t)^TAA^Ta(t)},
\end{align}
\noindent
then the process $\widehat Z$ defined by
\begin{equation}\label{hatZ_dynamics}
d\widehat Z_t = \alpha(t,\widehat Z_t)dt + \gamma(t)d\widehat W_t,
\end{equation}
where $\widehat W$ is a standard Brownian motion, has the same marginal distributions as the process $Z$. However, this does not imply that $V_t$ and $\widehat V_t$ have the same marginals. In the numerical results section below, we study the distributions of $V_t$ and $\widehat V_t$ by Monte Carlo simulation of the processes $\nu$ and $\widehat Z$, respectively. It turns out that the distributions are almost identical, and we proceed with this mimicking approach. It then remains to determine the function $\alpha$. We have
\begin{align}\label{eq:findalpha_1}
\alpha(t,z) &= E[\beta(t)|Z_t = z] = E[\dot a^T\nu_t + a^T\mu|a^T\nu_t = z]\nonumber\\
&= a^T\mu + \dot a^T(\nu_0+\mu t)\nonumber\\
&+ E[\dot a^TAW_t|a^TAW_t = z-a^T(\nu_0+\mu t)].
\end{align}
\noindent
Since all linear combinations of $W_t$ are Gaussian, we have
\begin{align}\label{eq:findalpha_2}
E[\dot a^TAW_t|a^TAW_t = z-a^T(\nu_0+\mu t)]=(z-a^T(\nu_0+\mu t))\frac{\text{Cov}(\dot a^TAW_t,a^TAW_t)}{\text{Var}(a^TAW_t)}.
\end{align}
\noindent
Using the independence of the marginal distributions of the components of $W$, we have
\begin{align}\label{eq:findalpha_3}
\text{Var}(a^TAW_t) &= \text{Var}(\sum_jW_t^j\sum_ia_iA_{ij})\nonumber\\
&= \sum_j\text{Var}(W_t^j)(\sum_ia_iA_{ij})^2 = ta^TAA^Ta = ta^T\Sigma a.
\end{align}
\noindent
Similarly,
\begin{align}\label{eq:findalpha_4}
\text{Cov}(\dot a^TAW_t,a^TAW_t) = ta^T\Sigma \dot a.
\end{align}
\noindent
Finally, we obtain the following explicit expression for $\alpha$,
\begin{align}\label{eq:findalpha_5}
\alpha(t,z) = a^T\mu + \dot a^T(\nu_0+\mu t) + (z-a^T(\nu_0+\mu t))\frac{a^T\Sigma \dot a}{a^T\Sigma a}.
\end{align}
\noindent
Curiously, it happens that $\widehat Z$ is a Hull-White process, a model form which allows for explicit pricing of discount factors, see Hull and White \cite{HullWhite}. Here, the hazard rate is given by the non-negative process $f(\widehat Z_t)$, which is no longer of Hull-White form. Hence, we are unable to exploit the tractability of the Hull-White model. This is not necessarily a bad thing, since the Hull-White process allows for negative hazard rates, a property which is not always desired. Still, we may use Proposition \ref{prop:moments_pde} to compute moments of $\widehat V_t$. From the representation \eqref{hatZ_dynamics}, \eqref{eq:moments_pde} becomes
\begin{align}\label{eq:moments_pde_adl}
\begin{cases}
-\frac{\partial v_n}{\partial s} + n(f(z)+r(s))v_n = \alpha(s,z)\frac{ \partial v_n}{\partial z}+\half\gamma^2(s)\frac{\partial^2v_n}{\partial z^2} + ng(s,z)v_{n-1}, \hspace{10pt} t\leq s<T\\
v_n(T,z) = 0,
\end{cases}
\end{align}
\noindent
with $f$, $\alpha$ and $\gamma$ given by \eqref{eq:f}, \eqref{eq:findalpha_5} and \eqref{eq:alpha_and_gamma}, respectively. The PDE \eqref{eq:moments_pde_adl} can be solved using numerical methods, e.g. finite-difference schemes.

\section{Numerical results}\label{sec:numerics}
In this section, we implement two disability termination models together with the dimension reduction techniques of Sections \ref{sec:multivariate_markov} and \ref{sec:mimicking}. The parameters of the models for the years 2000-2011 are estimated using the method from \cite{ADL13}. Using Monte Carlo simulations, the distribution of the functional $V_t$ is compared to the distributions of $\widehat V_t^1$ and $\widehat V_t^2$, where $\widehat V_t^1$ denotes the functional of the multivariate Markov process constructed in Section \ref{sec:multivariate_markov}, and $\widehat V_t^2$ denotes the functional of the Markov projection process of Section \ref{sec:mimicking}. Further, the PDE \eqref{eq:moments_pde_adl} is used to compute the first three moments of $\widehat V_t^2$, where we have chosen the parameters $x=55$, $T=10$, $r=0.02$, $t=0$, $g(t,z) = 1$. The PDE is solved using a first order implicit finite-difference scheme, and the results are compared to a Monte Carlo simulation with 100,000 draws and $\Delta t = 0.01$.

\subsection{A linear model}
We consider the model from Section \ref{sec:example} with basis functions 
\begin{align*}
 &\phi^1(x) =  \frac{64-x}{39} \quad  \text{, } \quad \phi^2(x) = \frac{x-25}{39}\\
 &\psi^1(t) = 1 \quad  \text{, } \quad \psi^2(t) = t.
\end{align*}
\noindent
We assume that $\nu$ follows a 4-dimensional Brownian motion, and estimate the drift and covariance matrix from the time series of parameter values. 

The densities and distribution functions of $V_t$, $\widehat V_t^1$ and $\widehat V_t^2$ are presented in Figures \ref{fig:pdf_v_vs_multivariate_markov}-\ref{fig:cdf_v_vs_mimick}. Note that, due to confidentiality, the $x$-axes are presented as fractions of the Best Estimate anno 2011. Here, the Best Estimate is defined as the value of the initial reserve assuming that the model parameters are held constant over the entire policy period. 

As can be seen in the plots, the density- and distribution functions of $\widehat V_t^1$ and $\widehat V_t^2$ are almost identical to those of $V_t$. Indeed, using a standard two-sample Kolmogorov-Smirnov test, we cannot reject the hypothesis that the samples of $V_t$ and $\widehat V_t^1$ are drawn from the same distribution. The corresponding $p$-value is 0.56. However, we can in fact reject the hypothesis that the samples of $V_t$ and $\widehat V_t^2$ are drawn from the same distribution. Still, we conclude that we can consider $\widehat V_t^2$ as an approximation of $V_t$, and that, as expected, $V_t$ and $\widehat V_t^1$ have identical distributions. This is a highly useful result since it reduces the dimensionality of the problem, which significantly reduces the computational cost. In this example, the choice stands between obtaining an exact result with two space dimensions, or an approximate result with one space dimension, compared to the four space dimensions of the original problem.

Numerical values from the PDE solver for the first three moments, as a fraction of the Best Estimate anno 2011, are presented in Table \ref{table:moments_pde}. The values of $v_1$ correspond to the initial reserve. We present the values as fractions of the Best Estimate rather than monetary units due to confidentiality. 99$\%$ approximate confidence intervals of moments of $V_t$, $\widehat V_t^1$ and $\widehat V_t^2$ from the Monte Carlo simulation are presented in Table \ref{table:moments_mc1}. As we can see, the moments from the PDE solver lie well within the 99$\%$ confidence intervals from the Monte Carlo simulation of $\widehat V_t^2$, and a few percentage points above the 99$\%$ confidence intervals from the Monte Carlo simulation of $V_t$. We stress the fact that we are trading accuracy for computational efficiency.

\begin{table}[!ht]
\caption{Moments of $\widehat V_t^2$ from the PDE solver, scaled by the Best Estimate.}
\label{table:moments_pde}
\footnotesize
\begin{center}
\begin{tabular}{c|ccc}
 $\Delta z = \Delta t$ & $v_1$ & $v_2$ & $v_3$ \\
\hline
0.1 & 0.9097 & 0.8019 & 0.7567\\
0.05 & 0.9064 & 0.7929 & 0.7370\\
0.01 & 0.9040 & 0.7865 & 0.7239\\
0.005 & 0.9037 & 0.7858 & 0.7226\\
0.001 & 0.9035 & 0.7853 & 0.7217\\
\end{tabular} \\
\end{center}
\end{table}

\begin{table}[!ht]
\caption{99$\%$ approximate confidence intervals of moments of $V_t$, $\widehat V_t^1$ and $\widehat V_t^2$ from the Monte Carlo simulation.}
\label{table:moments_mc1}
\footnotesize
\begin{center}
\begin{tabular}{c|ccc}
 & $v_1$ & $v_2$ & $v_3$ \\
\hline
MC, $V_t$            & (0.8986    0.9011) & (0.7720    0.7772) & (0.6938    0.7030)\\
MC, $\widehat V_t^1$ & (0.8991    0.9016) & (0.7726    0.7778) & (0.6944    0.7035)\\
MC, $\widehat V_t^2$ & (0.9013    0.9041) & (0.7812    0.7870) & (0.7148    0.7257)\\
\end{tabular} \\
\end{center}
\end{table}

\begin{figure}[!ht]
\begin{center}
\epsfig{file=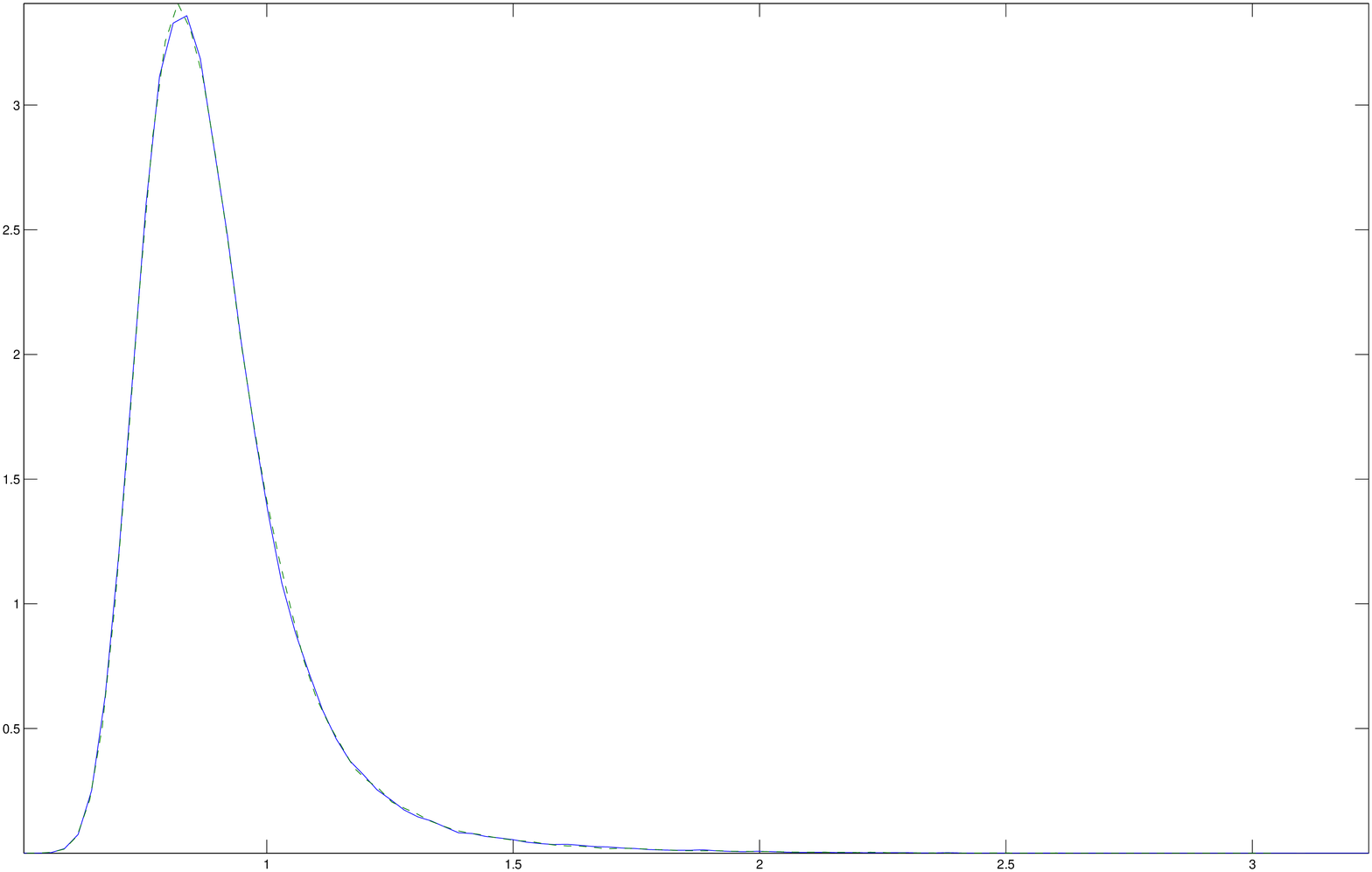,height=0.7\linewidth, width=\linewidth, angle=0}
\vspace{-25pt}
\caption{The densities of $V_t$ (solid) and $\widehat V_t^1$ (dashed).}
\label{fig:pdf_v_vs_multivariate_markov}
\end{center}
\end{figure}

\begin{figure}[!ht]
\begin{center}
\epsfig{file=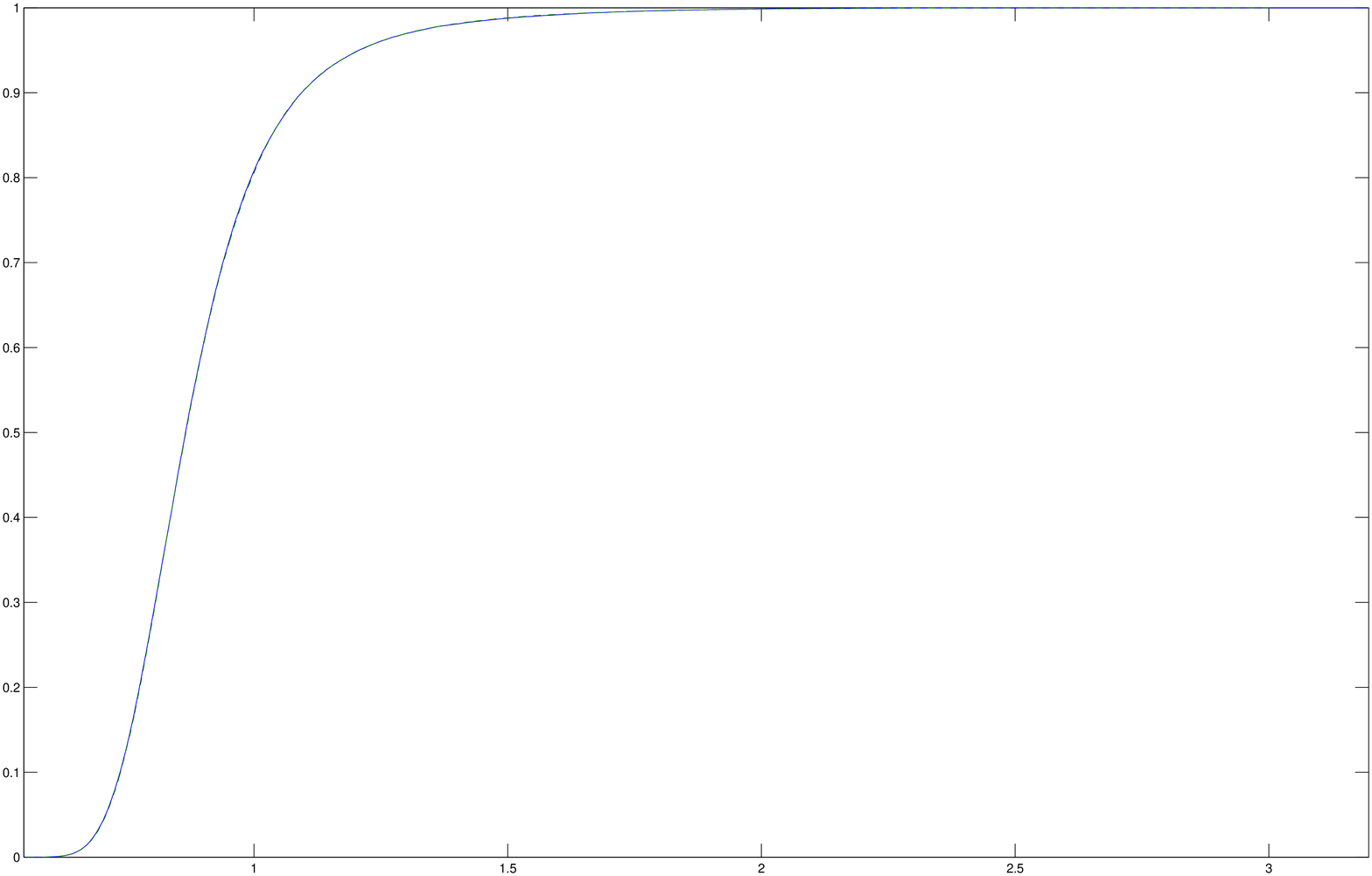,height=0.7\linewidth, width=\linewidth, angle=0}
\vspace{-25pt}
\caption{The distribution functions of $V_t$ (solid) and $\widehat V_t^1$ (dashed).}
\label{fig:cdf_v_vs_multivariate_markov}
\end{center}
\end{figure}

\begin{figure}[!ht]
\begin{center}
\epsfig{file=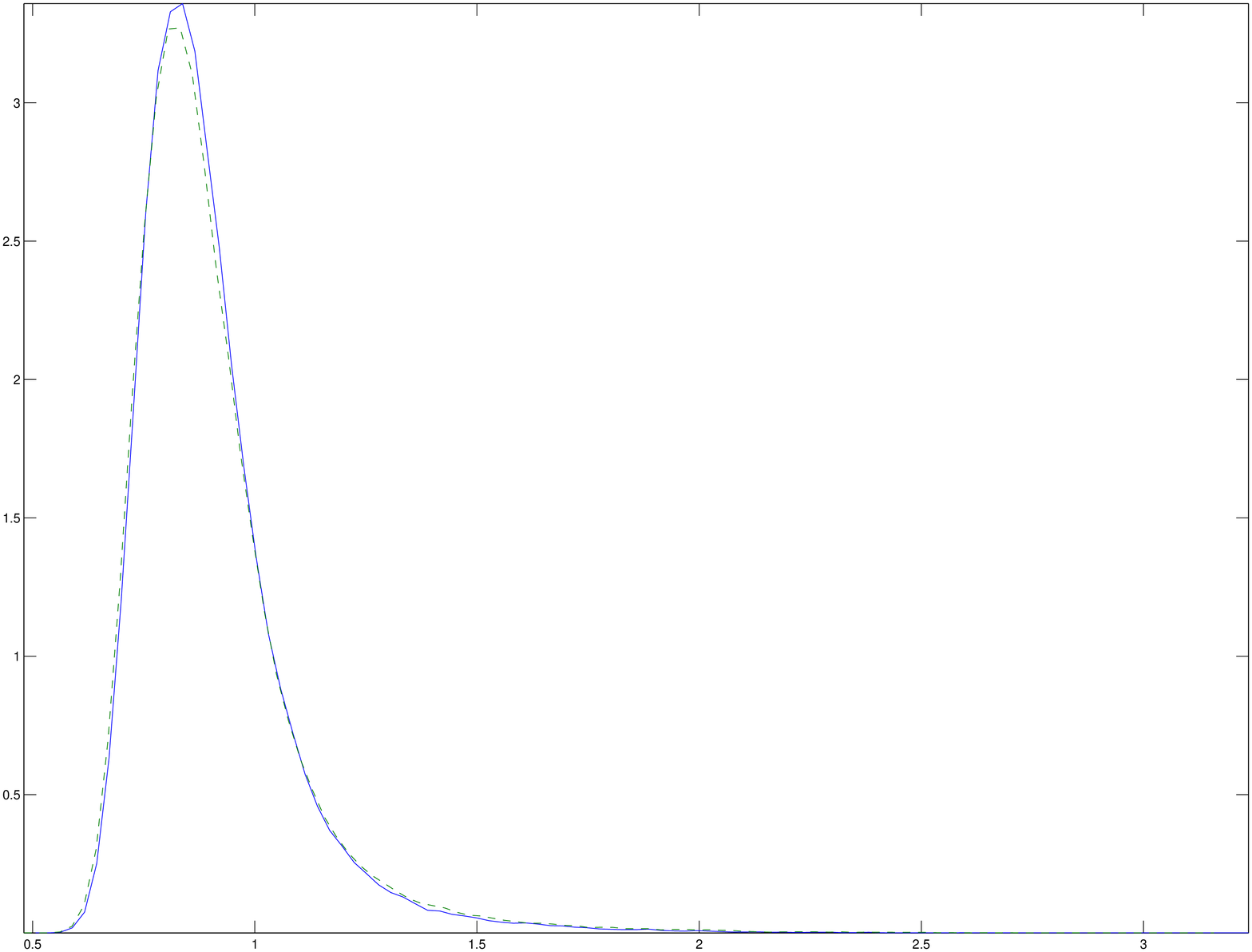,height=0.7\linewidth, width=\linewidth, angle=0}
\vspace{-25pt}
\caption{The densities of $V_t$ (solid) and $\widehat V_t^2$ (dashed).}
\label{fig:pdf_v_vs_mimick}
\end{center}
\end{figure}

\begin{figure}[!ht]
\begin{center}
\epsfig{file=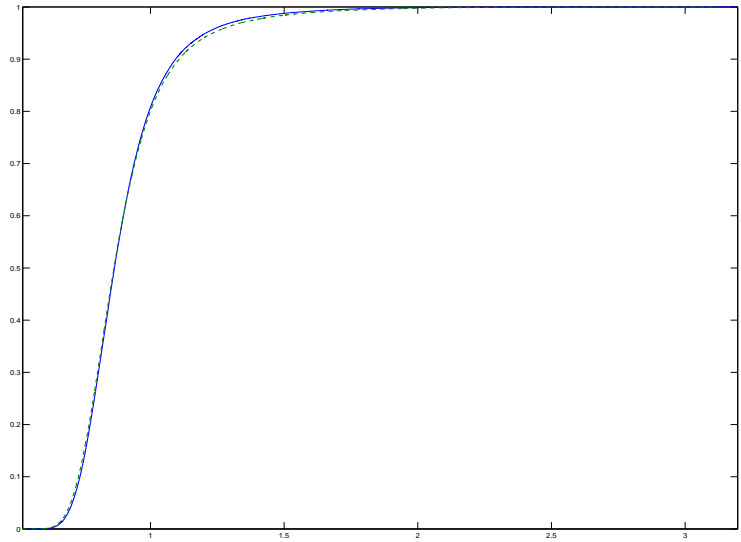,height=0.7\linewidth, width=\linewidth, angle=0}
\vspace{-25pt}
\caption{The distribution functions of $V_t$ (solid) and $\widehat V_t^2$ (dashed).}
\label{fig:cdf_v_vs_mimick}
\end{center}
\end{figure}

\clearpage
\subsection{A non-linear model}
Next, we consider the model from Section \ref{sec:example} with basis functions 
\begin{align*}
 &\phi^1(x) =  \frac{64-x}{39} \quad  \text{, } \quad \phi^2(x) = \frac{x-25}{39}\\
 &\psi^1(t) = 1 \quad  \text{, } \quad \psi^2(t) = e^{-t} \quad  \text{, } \quad \psi^3(t) = e^{-2t}.
\end{align*}
\noindent
Using the method from \cite{ADL13}, this model yields slightly better goodness of fit compared to the linear model. We assume that $\nu$ follows a 6-dimensional Brownian motion, and estimate the drift and covariance matrix from the time series of parameter values.

For this non-linear model, it is immediate that we cannot implement the recipe of Section \ref{sec:multivariate_markov} to reduce the dimensionality. Instead, we focus our efforts on the Markov projection technique of Section \ref{sec:mimicking}. 

The densities and distribution functions of $V_t$ and $\widehat V_t^2$ are presented in Figures \ref{fig:pdf_v_vs_mimick_6par}-\ref{fig:cdf_v_vs_mimick_6par}, and 99$\%$ approximate confidence intervals of moments of $V_t$ and $\widehat V_t^2$ from the Monte Carlo simulation are presented in Table \ref{table:moments_mc2}. It is apparent that the mimicking approximation performs slightly worse for this model compared to the linear model. However, comparing Table \ref{table:moments_mc2} with Table \ref{table:moments_mc1}, it seems that the approximation error and the model uncertainty are of the same magnitude: the deviations between the linear model and the non-linear model are similar to the deviations between any one of the models and its corresponding Markovian projection, at least for the first two moments. For the non-linear model, the Markovian projection shows a significant approximation error for the third moment. Again, we stress the fact that we are trading accuracy for computational efficiency. As the Markov projection technique seems to slightly overestimate both the moments and the thickness of the tail of $V_t$, it could possibly be used to obtain conservative risk estimates, although further research is needed to confirm this hypothesis.

\begin{table}[!ht]
\caption{99$\%$ approximate confidence intervals of moments of $V_t$ and $\widehat V_t^2$ from the Monte Carlo simulation.}
\label{table:moments_mc2}
\footnotesize
\begin{center}
\begin{tabular}{c|ccc}
 & $v_1$ & $v_2$ & $v_3$ \\
\hline
$V_t$ & (0.9292    0.9313) & (0.8156    0.8200) & (0.7368    0.7441)\\
$\widehat V_t^2$ & (0.9381    0.9415) & (0.8605    0.8688) & (0.8583    0.8768)\\
\end{tabular} \\
\end{center}
\end{table}

\begin{figure}[!ht]
\begin{center}
\epsfig{file=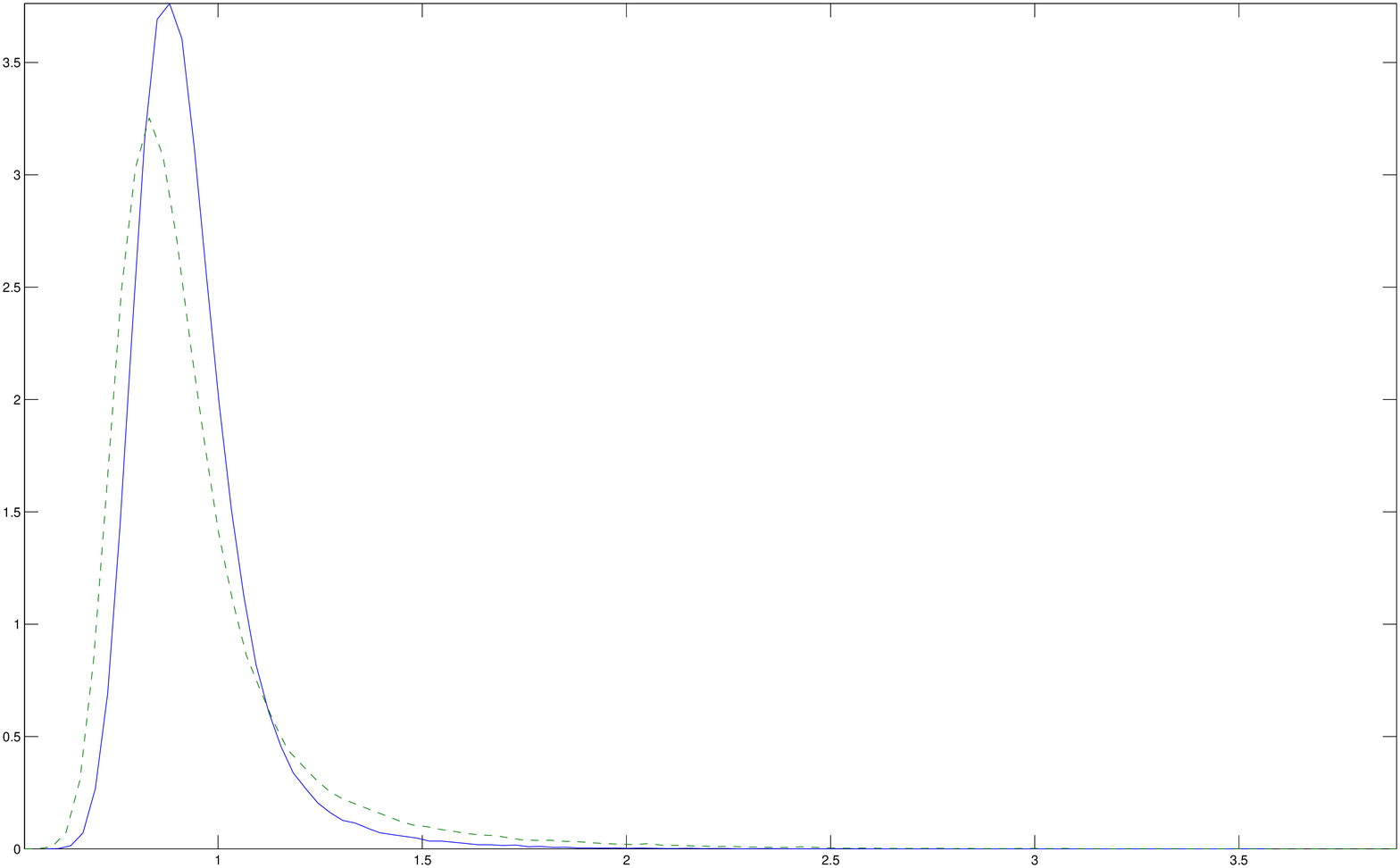,height=0.7\linewidth, width=\linewidth, angle=0}
\vspace{-25pt}
\caption{The densities of $V_t$ (solid) and $\widehat V_t^2$ (dashed).}
\label{fig:pdf_v_vs_mimick_6par}
\end{center}
\end{figure}

\begin{figure}[!ht]
\begin{center}
\epsfig{file=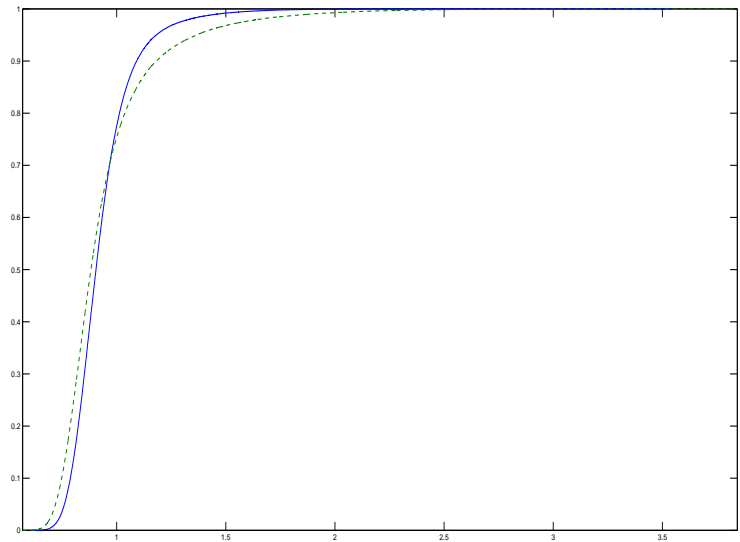,height=0.7\linewidth, width=\linewidth, angle=0}
\vspace{-25pt}
\caption{The distribution functions of $V_t$ (solid) and $\widehat V_t^2$ (dashed).}
\label{fig:cdf_v_vs_mimick_6par}
\end{center}
\end{figure}

\clearpage
\section{Acknowledgements}
The first author gratefully acknowledges financial support from the Swedish Export Credit Corp. (SEK). The second author gratefully acknowledges financial support from the Filip Lundberg  and Eir's 50 Years foundations. Both authors appreciate the helpful comments of an anonymous referee.

\clearpage
\newpage
\bibliography{disability_reserve}
\bibliographystyle{acm}

\end{document}